\newcounter{myctr}
\def\myitem{\refstepcounter{myctr}\bibfont\noindent\ifnum\themyctr>9\else\phantom{0}\fi\hangindent17pt\themyctr.\enskip}
\begin{document}

\catchline{}{}{}{}{}

\title{Comment on ``Quantum key agreement protocol"}

\author{Nayana Das}

\address{Applied Statistics Unit, Indian Statistical Institute, India\\
dasnayana92@gmail.com}

\author{Ritajit Majumdar}

\address{Advanced Computing \& Microelectronics Unit, Indian Statistical Institute, India\\
majumdar.ritajit@gmail.com}

\maketitle

\begin{history}
\end{history}

\begin{abstract}
The first two-party Quantum Key Agreement (QKA) protocol, based on quantum teleportation, was proposed by Zhou et al. (Electronics Letters 40.18 (2004): 1149-1150). In this protocol, to obtain the key bit string, one of the parties uses a device to obtain the inner product of two quantum states, one being unknown, and the other one performs Bell measurement. However, in this article, we show that it is not possible to obtain a device that would output the inner product of two qubits even when only one of the qubits is unknown. This is so because the existence of such a device would imply perfectly distinguishing among four different states in a two-dimensional vector space. This is not permissible in quantum mechanics. Furthermore, we argue that the existence of such a device would also imply a violation of the ``No Signalling Theorem" as well. 
\end{abstract}

\keywords{Inner product; Non-orthogonal states; No signalling; Quantum key agreement}


\markboth{N. Das \& R. Majumdar}
{Comment on ``Quantum key agreement protocol"}

\section{Introduction}
\label{sec:intro}
The emergence of quantum computing has left a significant impression on the field of cryptography. Since the time it was realized that quantum computers have the potential to decipher public key cryptosystems in polynomial time \cite{shor1999polynomial}, researchers have looked into protocols which make use of quantum mechanical laws in order to provide better security. The first ever quantum cryptographic protocol, to generate a shared secret key between two parties, was proposed by C. H. Bennett and G. Brassard in 1984 \cite{bennett1984quantum}. Quantum key distribution \cite{bennett1984quantum,ekert1991quantum,bennett1992experimental}, quantum secret sharing \cite{hillery1999quantum,cleve1999share,xiao2004efficient,gottesman2000theory}, quantum secure direct communication \cite{deng2004secure,wang2005quantum,das2020two,das2020improving,das2020cryptanalysis} and quantum key agreement (QKA) \cite{zhou2004quantum,tsai2009quantum,hsueh2004quantum,chong2010quantum,chong2011improvement,shi2013multi,liu2013multiparty,xu2014novel,shen2014two,huang2014cryptanalysis,huang2014quantum,he2015quantum,liu2016collusive,sun2016efficient,cai2017multi,he2019high,tang2020improvements} are some of the domains of quantum cryptography, in which the use of quantum mechanical laws has proven to be more efficient and secure than the existing protocols.

Key agreement is one of the basic requirements of cryptography, which allows two or more parties to agree on the same secret key by exchanging their information over public channels. The difference between key distribution and key agreement protocol is as follows: in key distribution protocol, one party can determine a key and distribute it to the other legitimate parties. But in a key agreement protocol, all the parties involved in the protocol contribute their information equally in order to generate a shared secret key. In 1976, Diffie and  Hellman first proposed a classical key agreement protocol \cite{diffie1976new}, whose security is based on the assumption that the discrete logarithm problem is computationally hard. But, in 1999, Shor proposed a quantum algorithm, which can solve the discrete logarithm problem in polynomial time \cite{shor1999polynomial}. Thus some key agreement protocol was in need, such that, the security of that protocol does not depend on the computation complexity of some mathematical problem.

The first two-party quantum key agreement protocol was proposed by Zhou et al.~\cite{zhou2004quantum}, who used quantum teleportation \cite{bennett1993teleporting} without the classical communication in order to generate the key bit string between the two parties. However, Tsai et al.~\cite{tsai2009quantum} pointed out that Zhou et al.'s protocol is not a fair key agreement protocol, since if one of the parties is malicious, then he can determine the key bits and distribute it to the other party without being detected. They also proposed an improved version of the previous one, but unfortunately, their protocol is still not a QKA as it does not satisfy the basic requirement of a key agreement protocol, i.e., the generated secret key is a sequence of random measurement results, which is not negotiated by any legitimate parties. In 2004, another two-party QKA protocol was proposed by Hsueh et al.~\cite{hsueh2004quantum} by using maximally entangled states. But later Chong et al. pointed out some security flaws in this protocol and showed that a malicious user can fully control the shared key alone~\cite{chong2011improvement}. They also proposed an improvement of this work. In 2010, Chong et al.~\cite{chong2010quantum} proposed a QKA protocol based on the BB84 protocol~\cite{bennett1984quantum} and the delayed measurement technique, where the sender encodes her key $K_A$ by preparing qubits in two different bases and sends those qubits to the receiver. Then the receiver encodes his key $K_B$ by applying some unitary operators on the received qubits and the final shared key becomes $K_{AB}=K_A \oplus K_B$. All the above-discussed protocols are two-party protocols. The first multi-party QKA protocol was proposed by Shi et al.~\cite{shi2013multi} by employing the Bell states and entanglement swapping~\cite{zukowski1993event}. But Liu et al. showed that it failed to be a QKA protocol as the shared key can be totally determined by a malicious participant alone, and they proposed the first secure multiparty QKA protocol with single particles~\cite{liu2013multiparty}.  Recently, other QKA protocols having different approaches are also explored~\cite{shen2014two,huang2014cryptanalysis,huang2014quantum,he2015quantum,liu2016collusive,sun2016efficient,cai2017multi,he2019high,tang2020improvements}.

In this article, we show that the first QKA protocol~\cite{zhou2004quantum}, from which some of the above-mentioned protocols follow, is not a valid protocol at all since it violates one fundamental law of quantum mechanics, namely - (i) the impossibility of distinguishing among a linearly dependent set of states \cite{nielsen2002quantum}, and one universal principle namely (ii) No signalling theorem \cite{kennedy1995empirical}.

Measurement of a quantum state is always associated with an orthonormal basis. Upon measurement, the quantum state collapses to one of the bases vector. Two quantum states $\ket{\psi}$ and $\ket{\phi}$ are said to be orthogonal if their inner product is zero, i.e. $\braket{\psi|\phi} = 0$. If $\ket{\psi}$ and $\ket{\phi}$ are orthogonal, and also the states are known to the observer, then he can prepare a measurement $M$ in the basis $\{\ket{\psi},\ket{\phi}\}$. When one of the two states $\ket{\psi}$ and $\ket{\phi}$ is provided to the observer, he can apply the measurement $M$, and the outcome will determine the state uniquely. However if the states $\ket{\psi}$ and $\ket{\phi}$ are non-orthogonal, they can not be distinguished reliably and if one consider a linearly dependent set of states (three or more states in two dimension), the probability of distinguishing among them is zero.

No signalling theorem states that it is not possible to transfer a signal instantaneously using quantum correlation. In other words, quantum mechanics does not have any contradiction with the theory of relativity, and the speed of transmission of information over a quantum channel is bounded by the speed of light. It can be noted that in quantum teleportation, one of the parties needs to send two classical bits of information to the other party. This sending of classical information provides a bound to the speed of information transmission by this protocol.

In the key agreement protocol proposed by Zhou et al. \cite{zhou2004quantum}, both of these fundamental laws are violated, which makes this protocol invalid. Moreover, this is not a valid key agreement protocol as one of the parties alone can determine the secret key without being noticed by the other party. Furthermore, if this party is dishonest, he can even make the other party decide on a key according to his own choice.

The rest of the paper is arranged as follows. In the next section, we briefly describe the QKA protocol of Zhou et al. \cite{zhou2004quantum}. In Section \ref{sec:problem}, we individually elaborate on the fallacies of this QKA protocol, and thus show why this protocol will fail to work. Finally a short conclusion is given in Section \ref{sec:conclusion}.

\section{\label{sec:review}Brief Review of Zhou et al.'s Protocol}
In this section, we briefly describe the ``Quantum key agreement" protocol \cite{zhou2004quantum}, proposed by Zhou et al. in 2004. This is a two-party QKA protocol, in which two parties, namely, Alice and Bob, want to establish a private key over a public channel by using the technique of quantum teleportation. The steps of the protocol are as follows:
\begin{enumerate}
	\item First Alice prepares a two-qubit maximally entangled state, also called the Singlet state, $\ket{\Psi^-}_{ab}=\frac{1}{\sqrt{2}}(\ket{0}_a\ket{1}_b-\ket{1}_a\ket{0}_b)$ in her lab, where $a$ and $b$ denote the first and second particles respectively. She keeps the first particle $a$ with her and sends the second particle $b$ to Bob. At the end of this step, Alice and Bob share a maximally entangled state between them.
	
	\item After receiving the particle $b$ from Alice, Bob prepares two identical single qubit states $\ket{\phi_c}$ and $\ket{\phi_d}$, where $\ket{\phi_c}=\ket{\phi_d} = \alpha\ket{0} + \beta\ket{1}$, and Bob knows the values of $\alpha$ and $\beta$. The composite state of three particles $a$, $b$ and $c$ is
	\begin{center}
		$\ket{\Psi}_{abc}=\ket{\Psi^-}_{ab}\ket{\phi_c}$,
	\end{center}
	which can be written as
	\begin{eqnarray*}
		\ket{\Psi}_{abc} &=& \frac{1}{2}[-(\alpha\ket{0}+ \beta\ket{1})_a \ket{\Psi^-}_{bc}\\
		&+& (\alpha\ket{0} - \beta\ket{1})_a \ket{\Psi^+}_{bc}\\
		&-& (\beta \ket{0}+ \alpha \ket{1})_a \ket{\Phi^-}_{bc}\\
		&+& (\beta \ket{0}- \alpha \ket{1})_a \ket{\Phi^+}_{bc}]
	\end{eqnarray*}
	where $\ket{\Phi^{\pm}}_{bc}=\frac{1}{\sqrt{2}}(\ket{0}_b\ket{0}_c \pm \ket{1}_b \ket{1}_c)$ and $\ket{\Psi^{\pm}}_{bc}=\frac{1}{\sqrt{2}}(\ket{0}_b\ket{1}_c \pm \ket{1}_b\ket{0}_c)$.
	
	\item Bob measures the particles $b$ and $c$ in Bell basis $\mathcal{B}=\{\ket{\Phi^{+}}, \ket{\Phi^{-}},\ket{\Psi^{+}},\ket{\Psi^{-}}\}$, and depending on the measurement outcome, the state of Alice's particle (let it be $\ket{\phi_a}$) collapses on one of these four states $(\alpha\ket{0} + \beta\ket{1})_a$, $(\alpha\ket{0} - \beta\ket{1})_a$, $(\beta \ket{0}+ \alpha \ket{1})_a$ and $(\beta \ket{0}- \alpha \ket{1})_a$ with equal probability. That is, the state of the particle $c$ is teleported to Alice. However, Bob does not disclose the outcome of his measurement, and hence neither Alice, nor any eavesdropper, knows the state of the qubit $\ket{\phi_a}$. At the same time Bob sends the qubit $\ket{\phi_d}$ to Alice through a private optical fibre.
	
	\item The measurement results $\ket{\Phi^{+}}, \ket{\Phi^{-}},\ket{\Psi^{+}}$ and $\ket{\Psi^{-}}$ denote the resultant key bits $11,10,01$ and $00$ respectively.
	
	\item \label{inner product} After receiving the particle $d$ from Bob, Alice calculates $\braket{\phi_a|\phi_d}$, i.e., the inner product of the two states of the particles $a$ and $d$. If there is no eavesdropper, then the possible results are $\alpha ^2 +\beta ^2=1, \alpha ^2 -\beta ^2, 2\alpha \beta, 0$ and these results correspond to the key bits $00, 01, 10, 11$ respectively. The relation between the outcomes and the key bits are shown in Table \ref{key table}. Bob chooses the values of $\alpha$ and $\beta$ such that $ \alpha ^2 -\beta ^2 \ne 2\alpha \beta \ne 0, 1$. Thus Alice and Bob both obtain their key bits.
\end{enumerate}

\begin{table}[htb]
	\centering
	\caption{Relation between the measurement outcomes and the values of the key bits}
	\begin{tabular}{|c|c|c|c|}
		\hline
		measurement  &  State of $\ket{\phi}_a$ & $\braket{\phi_a|\phi_d}$ & Key\\
		outcome of Bob & & & \\
		\hline
		$\ket{\Psi^-}_{bc}$ & $(\alpha\ket{0}+ \beta\ket{1})_a$ & $1$ & $00$\\
		\hline
		$\ket{\Psi^+}_{bc}$ & $(\alpha\ket{0}- \beta\ket{1})_a$ & $\alpha ^2 -\beta ^2$ & $01$\\
		\hline
		$\ket{\Phi^-}_{bc}$ & $(\beta\ket{0}+ \alpha\ket{1})_a$ & $2\alpha \beta$ & $10$\\
		\hline
		$\ket{\Phi^+}_{bc}$ & $(\beta\ket{0}- \alpha\ket{1})_a$ & $0$ & $11$\\
		\hline
	\end{tabular}
	\label{key table}
\end{table}

In 2009, Tsai and Hwang \cite{tsai2009quantum} pointed out that the above protocol is not a \emph{key agreement} protocol, since if Bob is malicious, then he can establish his preferable pre-defined key as the secret key, without being detected. To do this, Bob first measures the particles $b$ and $c$ in the Bell basis, and after observing the measurement result, he prepares the particle $d$ such that the result of the inner product implies his preferable key. 

\section{\label{sec:problem}Fallacies of this QKA Protocol}
In this section, we show that the QKA protocol by Zhou et al. violates two fundamental properties of quantum mechanics. We elaborate on each of them individually and then argue why the above protocol, let alone a valid \emph{key agreement}, is not a valid protocol as well.

\subsection{Implication on non-orthogonal state discrimination}
In the final stage of the protocol, Bob sends a qubit $\ket{\phi_d} = \alpha\ket{0} + \beta\ket{1}$ to Alice. He also teleports the state $\ket{\phi_c}$ to Alice, but does not disclose the outcome of his Bell measurement. In this stage, Alice has the qubit $\ket{\phi_a}$, which is in one of the following states
\begin{eqnarray*}
	\ket{\phi_a^1} &=& \ket{\phi_c} = \alpha\ket{0} + \beta\ket{1}\\
	\ket{\phi_a^2} &=& \sigma_x\ket{\phi_c} = \alpha\ket{1} + \beta\ket{0}\\
	\ket{\phi_a^3} &=& \sigma_z\ket{\phi_c} = \alpha\ket{0} - \beta\ket{1}\\
	\ket{\phi_a^4} &=& \sigma_z\sigma_x\ket{\phi_c} = \alpha\ket{1} - \beta\ket{0}\\
\end{eqnarray*}

Now, Alice uses her device to calculate the inner product $\braket{\phi_a^i|\phi_d}$, where $\alpha$ and $\beta$ are known and the outcome determines the state $\ket{\phi_a^i}$, as shown in Table \ref{key table}. We argue that, given the two qubits $\ket{\phi_a^i}$ and $\ket{\phi_d}$, it is not possible for Alice to physically calculate this inner product.

\begin{lemma}
	There exists no device which can calculate the inner product $\braket{\phi_a^i|\phi_d},\\ i= 1,2,3,4$.
\end{lemma}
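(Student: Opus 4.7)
The plan is to derive a contradiction with the principle, already stated in the introduction, that a linearly dependent set of quantum states cannot be perfectly distinguished. The key idea is to show that the hypothetical inner-product device would function as a perfect discriminator among the four candidate states $\ket{\phi_a^1},\ldots,\ket{\phi_a^4}$, all of which live in the two-dimensional Hilbert space $\mathbb{C}^2$.

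First I would point out that Alice knows the scalars $\alpha$ and $\beta$, since they are part of the declared protocol setup, so the single-qubit state $\ket{\phi_d}$ carries no quantum information that Alice could not reproduce herself; she is free to fabricate any number of copies of it locally. Consequently, a device returning $\braket{\phi_a^i|\phi_d}$ can be regarded as a measurement acting only on the unknown qubit $\ket{\phi_a^i}$, whose classical outcome belongs to the set $\{1,\,\alpha^2-\beta^2,\,2\alpha\beta,\,0\}$. By the very constraint that Bob imposes on $\alpha,\beta$, namely $\alpha^2-\beta^2\ne 2\alpha\beta\ne 0\ne 1$, these four numbers are pairwise distinct, and therefore the device identifies the state $\ket{\phi_a^i}$ without error for every $i\in\{1,2,3,4\}$.

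To finish, I would observe that four vectors in $\mathbb{C}^2$ are necessarily linearly dependent, so the ensemble $\{\ket{\phi_a^i}\}_{i=1}^{4}$ is a linearly dependent family. This directly contradicts the no-go statement recalled in Section \ref{sec:intro}: no quantum measurement can perfectly discriminate among a linearly dependent set of states. The main obstacle is conceptual rather than computational — one must be willing to treat the fully known qubit $\ket{\phi_d}$ as classical side information, so that the ostensibly two-qubit ``inner product'' task is recognised as a disguised four-way single-qubit discrimination problem. Once that reduction is accepted, the impossibility follows immediately from a standard fact about non-orthogonal (indeed linearly dependent) state discrimination, and no further calculation is needed.
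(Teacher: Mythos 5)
Your proposal is correct and follows essentially the same route as the paper's own proof: reduce the hypothetical inner-product device to a perfect discriminator among the four states $\ket{\phi_a^1},\ldots,\ket{\phi_a^4}$, which are necessarily linearly dependent in a two-dimensional Hilbert space, and invoke the impossibility of perfectly distinguishing a linearly dependent set. Your explicit remark that $\ket{\phi_d}$ is fully known to Alice and can therefore be treated as classical side information (making the task a single-qubit four-way discrimination) is a small clarifying refinement that the paper leaves implicit, but the core argument is identical.
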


\begin{proof}
	Let us assume that there is a device that can calculate the inner product mentioned above. The four possible qubit states $\ket{\phi_a^1},\ket{\phi_a^2},\ket{\phi_a^3}$ and $\ket{\phi_a^4}$ reside in a two dimensional Hilbert Space. Since four qubits can not be mutually orthogonal to each other in a two-dimensional vector space, therefore, the four qubit states cannot be mutually orthogonal to each other for any value of $\alpha,\beta$. As discussed in Sec.~\ref{sec:intro}, it is not possible to distinguish among the linearly dependent set of quantum states with a single copy. However, in this protocol, since the inner product is different for the different possible states of $\ket{\phi_a^i}$, if such a device exists it can perfectly distinguish between the four different states, which violates the postulate of quantum mechanics. Therefore, such a device cannot exist.
\end{proof}

This implies that it is not possible to physically calculate the inner product $\braket{\phi_a^i|\phi_d}$ given the two qubits.

\subsection{Violation of No signalling theorem}
In order to show  how the existence of such above mentioned device violates no-signalling condition, we consider a two qubits singlet state $\ket{\Psi^-}$ shared between Alice and Bob. This state has the property that it retains the same form in every spin direction \cite{nielsen2002quantum}. Therefore, for any two qubit states $\ket{\psi}, \ket{\bar{\psi}}$, such that $\braket{\psi|\bar{\psi}} = 0$, the singlet state can be expressed as
\begin{equation}
\ket{\Psi^-} = \frac{1}{\sqrt{2}}(\ket{\psi}\ket{\bar{\psi}} - \ket{\bar{\psi}}\ket{\psi})
\end{equation}

We assume $\alpha, \beta \in R$ and then from the previous subsection one easily note that $\braket{\phi_a^1|\phi_a^4} = 0$ and $\braket{\phi_a^2|\phi_a^3} = 0$. Therefore,
\begin{eqnarray}
\ket{\Psi^-} &=& \frac{1}{\sqrt{2}}(\ket{\phi_a^1}\ket{\phi_a^4} - \ket{\phi_a^4}\ket{\phi_a^1})\\
&=& \frac{1}{\sqrt{2}}(\ket{\phi_a^2}\ket{\phi_a^3} - \ket{\phi_a^3}\ket{\phi_a^2})
\end{eqnarray}

Now consider that Bob randomly decides to measure his qubit of the shared singlet state $\ket{\Psi^-}$ either in $M_1 = \{\ket{\phi_a^1},\ket{\phi_a^4}\}$ basis, or in $M_2 = \{\ket{\phi_a^2},\ket{\phi_a^3}\}$ basis. However, Bob does not disclose his choice of measurement basis, nor the outcome of his measurement.

If Bob measures his qubit in $M_1$ basis, then the density matrix of Alice's qubit is the following:
\begin{equation}
\rho_{14} = \frac{1}{2}(\ket{\phi_a^1}\bra{\phi_a^1} + \ket{\phi_a^4}\bra{\phi_a^4}).
\end{equation}

Whereas, if Bob measures in $M_2$ basis, then the density matrix of Alice's qubit is the following:
\begin{equation}
\rho_{23} = \frac{1}{2}(\ket{\phi_a^2}\bra{\phi_a^2} + \ket{\phi_a^3}\bra{\phi_a^3}).
\end{equation}

Since Alice and Bob can have an arbitrary spatial distance between them, if Alice can distinguish between $\rho_{14}$ and $\rho_{23}$, then she can determine the measurement basis chosen by Bob without any classical communication from Bob. This would violate the No signalling theorem.

However, in the protocol by Zhou et al \cite{zhou2004quantum}, Alice is equipped with a device, which can distinguish among the four different states $\ket{\phi_a^1},\ket{\phi_a^2},\ket{\phi_a^3}$ and $\ket{\phi_a^4}$. Therefore, with this device, Alice can also uniquely determine the state of her qubit after the measurement of Bob. If the state of Alice's qubit is (i) $\ket{\phi_a^1}$ or $\ket{\phi_a^4}$, then Bob must have measured in $M_1$ basis, and if the state is (ii) $\ket{\phi_a^2}$ or $\ket{\phi_a^3}$, then Bob must have measured in $M_2$ basis. Hence Alice can determine the measurement chosen by Bob without any classical communication by the later. This obviously violates the No signalling theorem. Even some weaker device that helps Alice distinguish between the two possible preparations ($\rho_{14}$ and $\rho_{23}$) of the same density matrix would also imply a violation of the no signalling principle \cite{nielsen2002quantum}.

\section{\label{sec:conclusion}Conclusion}
In this comment, we analyze the first two-party QKA protocol proposed by Zhou et al. in 2004 \cite{zhou2004quantum}. We observed that though the resources used in this protocol, namely one ebit of entanglement and physical transfer of one qubit, may be sufficient for key agreement for some properly chosen protocol, the suggested protocol is basically flawed. The protocol is not consistent with allowed physical operation in quantum mechanics. More importantly, the existence of such a protocol would imply violation of ``No signalling condition" obeyed by all physical theories including quantum mechanics.

\section*{Acknowledgement}
The authors would like to acknowledge Guruprasad Kar, Tamal Guha and Sutapa Saha of Physics and Applied Mathematics Unit, Indian Statistical Institute, for the stimulating discussions and their insightful comments.

\bibliographystyle{spphys}
\bibliography{QKA}

\end{document}